\newcounter{theorem}
\renewcommand\thetheorem{\arabic{section}.\arabic{theorem}}
\newenvironment{lemma}{\par\medskip\noindent\begingroup{\bf Lemma
             \stepcounter{theorem}\thetheorem.}\ \itshape
             \def\@currentlabel{\thetheorem}}{\endgroup\par\medskip}
\newenvironment{theorem}{\par\medskip\noindent\begingroup{\bf Theorem
             \stepcounter{theorem}\thetheorem.}\ \itshape
             \def\@currentlabel{\thetheorem}}{\endgroup\par\medskip}
\newenvironment{remark}{\par\medskip\noindent\begingroup{\bf Remark
             \stepcounter{theorem}\thetheorem.}\
             \def\@currentlabel{\thetheorem}}{\endgroup\par\medskip}
\newenvironment{proof}{\par\noindent{\bf Proof.} }{\proofbox\par\medskip}
\def\proofbox{\hfill{\ensuremath\Box}}
\newdimen\LENB \newdimen\LENW \newdimen\THI
\newdimen\LENWH \newdimen\LENTOT \newcount\N
\def\vbrknlnele#1#2#3{
  \LENB=#1pt \LENW=#2pt \THI=#3pt
  \LENWH=\LENW \divide\LENWH by 2
  \LENTOT=\LENB \advance\LENTOT by \LENW
  \vbox to \LENTOT{
    \vbox to \LENWH{}
    \nointerlineskip
    \vbox to \LENB{\hbox to \THI{\vrule width \THI height \LENB}}
    \nointerlineskip
    \vbox to \LENWH{}
  }}
\def\vbrknln#1{
  \N=#1
  \vcenter{
    \vbox{
      \loop\ifnum\N>0
        \vbox to 4pt{\vbrknlnele{2}{2}{0.1}}
        \nointerlineskip
        \advance\N by -1
      \repeat
  }}}
\def\hbrknlnele#1#2#3{
  \LENB=#1pt \LENW=#2pt \THI=#3pt
  \LENTOT=\LENB \advance\LENTOT by \LENW
  \vcenter{
    \vbox to \THI{
      \hbox to \LENTOT{
        \hfil
        \vrule width \LENB height \THI
        \hfil}
  }}}
\def\journal#1&#2,{\begingroup \let\journal=\dummyjournal
               \it #1\unskip~\bf\ignorespaces #2\rm,\endgroup}
\def\dummyjournal{\errmessage{Reference foul up: nested \journal macros}}
\def\eqref#1{(\ref{#1})}
\begin{document}
\title[Integrable semi-discretizations of the reduced Ostrovsky equation]
  {Integrable semi-discretizations of the reduced Ostrovsky equation}
\author{Bao-Feng Feng$^1$, Ken-ichi Maruno$^2$ and
Yasuhiro Ohta$^{3}$
}
\address{$^1$~Department of Mathematics,
The University of Texas-Pan American,
Edinburg, TX 78541
}

\address{$^2$~Department of Applied Mathematics,
Waseda University, Tokyo 169-8050, Japan
}
\address{$^3$~Department of Mathematics,
Kobe University, Rokko, Kobe 657-8501, Japan
}
\eads{\mailto{feng@utpa.edu}, \mailto{kmaruno@waseda.jp} and
\mailto{ohta@math.kobe-u.ac.jp}}

\date{\today}
\def\submitto#1{\vspace{28pt plus 10pt minus 18pt}
     \noindent{\small\rm Accepted by : {\it #1}\par}}
\begin{abstract}
Based on our previous work to the reduced Ostrovsky equation (J. Phys. A {\bf 45} 355203), we
construct its integrable semi-discretizations. Since the reduced Ostrovsky equation admits two alternative representations, one is its original form, the other is the differentiation form, or the short wave limit of the  Degasperis-Procesi equation, two semi-discrete analogues of the reduced Ostrovsky equation are constructed possessing the same $N$-loop soliton solution. The relationship between these two versions of semi-discretizations is also clarified.
\par
\kern\bigskipamount\noindent
\today
\end{abstract}
\kern-\bigskipamount
\pacs{02.30.Ik, 05.45.Yv,42.65.Tg, 42.81.Dp}

\submitto{\JPA}

\section{Introduction}
In this paper, we consider integrable discretizations of the reduced Ostrovsky equation~
\begin{equation}
\partial_x
\left(\partial_t+u\partial_x\right)u
-3u=0\,,\label{vakhnenko}
\end{equation}
which is a special case ($\beta=0$) of the Ostrovsky equation
\begin{equation}
\partial_x
\left(\partial_t+u\partial_x+\beta \partial_x^3\right)u
-\gamma u=0\,.\label{ostrovsky}
\end{equation}
The Ostrovsky equation was originally derived as a model for weakly nonlinear surface and internal
waves in a rotating ocean~\cite{Ostrovsky,Stepanyants}.
Later on, the same equation was derived for different physical situations
by several authors~\cite{Hunter,Vakhnenko1}. Especially eq. (\ref{vakhnenko}) appears as a model for high-frequency
waves in a relaxing medium \cite{Vakhnenko1,Vakhnenko5}.
Note that the reduced Ostrovsky equation (\ref{vakhnenko}) is sometimes
called the Vakhnenko equation ~\cite{Vakhnenko2,Vakhnenko3,Vakhnenko4}, the Ostrovsky-Hunter equation
~\cite{Liu}, or the Ostrovsky-Vakhnenko equation~\cite{Brunelli,Shepelsky}. Travelling wave
solutions were investigated in \cite{Stepanyants,Boyd,Parkes}.
Vakhnenko et al. constructed the $N$ (loop) soliton
solution of the reduced
Ostrovsky equation by using a hodograph (reciprocal) transformation and
the Hirota bilinear method~\cite{Vakhnenko2,Vakhnenko3}.
The same problem was approached from the point of view of inverse
scattering method~\cite{Vakhnenko4}.

Differentiating the reduced Ostrovsky equation (\ref{vakhnenko}), we obtain
\begin{equation}
u_{txx}+3u_xu_{xx}+uu_{xxx}-3u_x=0\,,\label{short-DP-eq}
\end{equation}
which is known as the short wave limit of the Degasperis-Procesi (DP)
equation~\cite{Hone-Wang,MatsunoPLA}.
This equation is derived from the DP equation~\cite{DP}
\begin{equation}
U_T+3\kappa^3U_x- U_{TXX}+4UU_X=3U_XU_{XX}+UU_{XXX}\,,\label{DP-eq}
\end{equation}
by taking a short wave limit $\epsilon\to 0$ with
$U=\epsilon^2(u+\epsilon u_1+\cdots)$, $T=\epsilon t$,
$X=\epsilon^{-1}x$ and $\kappa=1$.
It is noted that the short wave limit of the DP equation can also be rewritten as
alternative form
\begin{equation}
(\partial_t+u\partial_x) m  = -3m u_x\,, \quad m=1-u_{xx}\,.
\end{equation}
Based on this connection, Matsuno \cite{MatsunoPLA} constructed $N$-soliton solution of the short wave model of
the DP equation from $N$-soliton solution of the DP equation~\cite{Matsuno-DP1,Matsuno-DP2}.
This $N$-soliton formula is equivalent to the one obtained by
Vakhnenko et al. ~\cite{Vakhnenko2,Vakhnenko3}.

As already mentioned previously, the reduced Ostrovsky equation (\ref{vakhnenko}), as well as its differentiation form (\ref{short-DP-eq}), has attracted much attention in the past. Hone and Wang constructed the Lax pairs for both of  equations \cite{Hone-Wang}. The bi-Hamiltonian structure for the reduced Ostrovsky equation (\ref{vakhnenko}) was found by Brunelli and Sakovich \cite{Brunelli},  its integrability and wave-breaking was studied in \cite{Grimshaw2012}. Interestingly, the short wave limit of the DP equation (\ref{short-DP-eq}) also serves as an asymptotic model for propagation of surface waves in deep water under the condition of small-aspect-ratio \cite{Manna2014}. Most recently, the inverse scattering transform (IST) problem for the short wave limit of the DP equation (\ref{short-DP-eq}) was solved by a Riemann-Hilbert approach \cite{Shepelsky}.

The reduced Ostrovsky equation (\ref{vakhnenko}) is known to be related to the Tzitzeica equation \cite{Tzitzeica1,Tzitzeica1,Willox,Nimmo09}, and also the so-called Dodd--Bullough--Mikhailov equation \cite{Dodd-Bullough,Mikhailov1,Mikhailov2}, by a reciprocal transformation. Based on this reciprocal link
 between the reduced Ostrovsky equation and 3-reduction of the B-type or C-type two-dimensional Toda lattice, i.e. the $A_2^{(2)}$ 2D-Toda lattice, multi-soliton solutions to both the reduced Ostrovsky equation (\ref{vakhnenko}) and its differentiation version were constructed by the authors \cite{FMO-VE}.

How to construct its integrable discrete analogue for a soliton equation has been an important topic since the discovery of soliton theory. Although several approaches have been developed starting from the mid-1970s, it remains a challenging and mysterious problem and has to be dealt with on a case by case base. Ablowitz and Ladik originated a method of integrable discretization based the Lax pair of a soliton equation \cite{AL1,AL2}. Almost at the same period, Hirota proposed an intriguing and universal approach based on the bilinear form of a soliton equation  \cite{Hirota-d1,Hirota-d2,Hirota-d3}. Another successful way to discretize soliton equations was proposed by Date, Jimbo and Miwa \cite{DJMdiscret1,DJMdiscret2,DJMdiscret3,DJMdiscret4,DJMdiscret5,Jimbo-Miwa} via the transformation group theory, which gives a large number of integrable disretizations. One of the most interesting example is the discrete KP equation, or the so-called Hirota-Miwa equation \cite{Hirota-d4,Jimbo-Miwa}, which can be viewed as the Master equation of discrete systems due to the reason that integrable discretization of many soliton equations such as discrete KdV equation, and discrete sine-Gordon equation can be obtained from Hirota-Miwa equation by reductions.
Suris also developed a general Hamiltonian approach for integrable discretizations of integrable systems, see Ref. \cite{Surisbook}.

The aim of this work is to construct integrable semi-discretizations of the reduced Ostrovsky equation
(\ref{vakhnenko}) and its differentiation form (\ref{short-DP-eq}) by virtue of Hirota's bilinear method.
The remainder of the present paper is organized as follows. In section 2, by constructing a semi-discrete analogue of a set of bilinear equations reduced from the period 3-reduction of the $B_{\infty}$ or
$C_{\infty}$ two-dimensional Toda system, we derive a semi-discrete reduced Ostrovsky equation based on Eq. (\ref{short-DP-eq}) and provide its $N$-loop soliton solution in terms of pfaffian. Then, an alternative semi-discrete reduced Ostrovsky equation is constructed based on Eq. (\ref{vakhnenko}) which shares the same $N$-loop soliton solution. It is interesting that a connection between two semi-discrete versions exists in analogue to a link between their continuous counterparts. We conclude our paper by some comments and further topics in section 4.
\section{Integrable semi-discretization of the short wave limit of the DP equation (\ref{short-DP-eq})}
It is shown in \cite{FMO-VE} that bilinear equations for the reduced Ostrovsky equation (\ref{short-DP-eq}) are
\begin{eqnarray}
&&-\left(\frac{1}{2}D_{y}D_{s}-1\right)f\cdot
 f=f g\,,\label{3CToda-bilinear1}\\
&&-\left(\frac{1}{2}D_{y}D_{s}-1\right)g \cdot
 g=f^2\,,
 \label{3CToda-bilinear2}
\end{eqnarray}
which originate from a period 3 reduction of BKP (CKP) hierarchy \cite{Nimmo09}.
Here $D_{y}D_{s}$ is the Hirota $D$-operator defined by
$$
D_y^m D_s^n f(y,s)\cdot g(y,s)=
\left(\frac{\partial}{\partial y} -\frac{\partial}{\partial {y'}} \right)^m
\left(\frac{\partial}{\partial s} -\frac{\partial}{\partial {s'}} \right)^n  f(y,s)g(y',s')|_{y=y',s=s'}\,.
$$
For the sake of convenience, we set $y=x_1$, $s=x_{-1}$. Under this reduction, one of the $tau$-functions $f$ turns out to be a square of a pfaffian
\cite{HirotaBook}
\begin{equation}\label{3CToda-bilinear3}
  \tau^2 = c f\,,
\end{equation}
where
$\tau={\rm Pf} (1,2, \cdots, 2N)$  is a pfaffian
whose elements are given by
\begin{equation}\label{pfaffian-VE}
  {\rm Pf} (i,j)=c_{i,j} + \frac{p_i-p_j}{p_i+p_j}e^{\xi_i+\xi_j}\,,
\end{equation}
with $$
 c_{i,j}=-c_{j,i}, \quad \xi_i=p_iy+ p_i^{-1} s + \xi_{i0}, \quad  c=\prod_{i=1}^{2N}2p_i\,.
$$

It was shown in \cite{FMO-VE} that bilinear equations (\ref{3CToda-bilinear1})--(\ref{3CToda-bilinear2}), together
with (\ref{3CToda-bilinear3}), yield the reduced Ostrovsky equation (\ref{short-DP-eq}) through a hodograph transformation
\begin{equation}\label{hodograph}
  x=y-2(\ln \tau)_{s}, \quad t=s,
\end{equation}
and a dependent transformation
\begin{equation}\label{u-transformation}
  u=-2(\ln \tau)_{ss}=-(\ln f)_{ss}\,.
\end{equation}
\begin{remark}
In accordance with the integrable discretizations which will be constructed hereafter, we choose an alternative hodograph transformation mentioned in Remark 2.15 of \cite{FMO-VE}.
\end{remark}
\subsection{Semi-discrete analogues of equations (\ref{3CToda-bilinear1})--(\ref{3CToda-bilinear3})}
Based on the results briefly mentioned above, we attempt to construct an integrable semi-discrete analogue of the
reduced Ostrovsky equation (\ref{short-DP-eq}). The key point is how to discretize the bilinear equations (\ref{3CToda-bilinear1})--(\ref{3CToda-bilinear3}). To this end, we start with Gram-type determinants
$$
g_{l}=\det_{1\le i,j\le 2N}\Big(m_{ij}(l)\Big), \qquad f_{l}=\det_{1\le i,j\le 2N}\Big(m_{ij}'(l)\Big)\,,
$$
where
$$
m_{ij}(l)=C_{i,j}+\frac{1}{p_i+p_j}\varphi_i^{(0)}(l)\varphi_j^{(0)}(l)\,,
$$
$$
m_{ij}'(l)=C_{i,j}+\frac{1}{p_i+p_j} \left( -\frac{p_j}{p_i} \right)\frac{1+bp_i}{1-bp_j}
\varphi_i^{(0)}(l)\varphi_j^{(0)}(l)\,,
$$
with
$$
C_{i,j}=C_{j,i}, \quad \varphi_i^{(n)}(l)
=p_i^n\left(\frac{1+bp_i}{1-bp_i}\right)^le^{\xi_i},
\quad
\xi_i=p_i^{-1} s+\xi_{i0}\,.
$$
Here $2b$ (not $b$) is the mesh size in $y$-direction. A relation between $f_l$ and $g_l$ is shown by the following lemma.
\begin{lemma}
\begin{equation}\label{Bilinear1}
 (D_s-2b)g_{l+1}\cdot g_{l}=-2bf^2_{l}\,,
\end{equation}
\end{lemma}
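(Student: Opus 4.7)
The plan is to reduce everything to rank-one perturbations of the matrix $M(l)=\bigl(m_{ij}(l)\bigr)_{1\le i,j\le 2N}$ and then run a short algebraic calculation. First, I would record the following three structural identities, all of which are immediate from the explicit form of $m_{ij}(l)$ and $m_{ij}'(l)$:
\begin{eqnarray*}
M(l+1)-M(l) &=& 2b\,\psi\psi^{T},\qquad \psi_i=\frac{\varphi_i^{(0)}(l)}{1-bp_i},\\
\partial_s M(l) &=& \chi\chi^{T},\qquad \chi_i=\varphi_i^{(-1)}(l)=\frac{\varphi_i^{(0)}(l)}{p_i},\\
M'(l) &=& M(l)-\chi\psi^{T}.
\end{eqnarray*}
The last identity follows from the easy check
$m_{ij}(l)-m_{ij}'(l)=\frac{\varphi_i\varphi_j}{p_i+p_j}\bigl(1+\frac{p_j(1+bp_i)}{p_i(1-bp_j)}\bigr)=\frac{\varphi_i\varphi_j}{p_i(1-bp_j)}=\chi_i\psi_j$. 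Note also the shift relation $\chi'_i:=\varphi_i^{(-1)}(l+1)=\chi_i+2b\psi_i$, obtained by a one-line calculation using $(1+bp_i)/(1-bp_i)$.

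Next, applying the matrix-determinant lemma to each of these rank-one perturbations, I would rewrite all the relevant tau-quantities as scalar multiples of $g_l=\det M(l)$:
\begin{eqnarray*}
g_{l+1} &=& g_l\,(1+2b\alpha),\\
f_l &=& g_l\,(1-\gamma),\\
\partial_s g_l &=& g_l\,\beta,
\end{eqnarray*}
where $\alpha=\psi^{T}M(l)^{-1}\psi$, $\beta=\chi^{T}M(l)^{-1}\chi$, and $\gamma=\psi^{T}M(l)^{-1}\chi$ (since $M(l)$ is symmetric, $\gamma$ is unambiguous). The symmetry of $M(l)$ is essential here and follows from $C_{ij}=C_{ji}$.

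To handle $\partial_s g_{l+1}=g_{l+1}\,\chi'^{T}M(l+1)^{-1}\chi'$, I would apply the Sherman--Morrison formula
$$M(l+1)^{-1}=M(l)^{-1}-\frac{2b\,M(l)^{-1}\psi\psi^{T}M(l)^{-1}}{1+2b\alpha},$$
expand $\chi'=\chi+2b\psi$, and collect. The calculation reduces to
$$\partial_s g_{l+1}=g_l\bigl[\beta+4b\gamma+4b^2\alpha+2b\alpha\beta-2b\gamma^{2}\bigr].$$

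Finally, substituting everything into the Hirota expression
$(D_s-2b)g_{l+1}\cdot g_l=(\partial_s g_{l+1})g_l-g_{l+1}(\partial_s g_l)-2b\,g_{l+1}g_l$, the $\beta$ and $\alpha\beta$ contributions cancel against $g_{l+1}\partial_s g_l=g_l^{2}\beta(1+2b\alpha)$, and one is left with
$2b\,g_l^{2}\bigl[2\gamma-\gamma^{2}-1\bigr]=-2b\,g_l^{2}(1-\gamma)^{2}=-2b\,f_l^{2}$,
which is the claim. The only delicate point is the bookkeeping of signs and powers of $b$ in the Sherman--Morrison step; once $\chi'=\chi+2b\psi$ is in hand, the cancellations leading to the clean square $(1-\gamma)^{2}$ are essentially forced, so I do not expect any genuine obstacle. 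An alternative route is to express $g_{l+1}$, $\partial_s g_l$, $\partial_s g_{l+1}$ and $f_l$ as bordered $(2N+1)\times(2N+1)$ or $(2N+2)\times(2N+2)$ determinants and invoke the Jacobi/Sylvester identity, but the direct scalar approach above seems shorter for this particular lemma.
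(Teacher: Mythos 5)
Your proposal is correct, and it takes a genuinely different route from the paper. I checked your three rank-one identities ($M(l+1)-M(l)=2b\,\psi\psi^{T}$, $\partial_s M(l)=\chi\chi^{T}$, $M'(l)=M(l)-\chi\psi^{T}$ with $\chi_i=\varphi_i^{(-1)}(l)$, $\psi_i=\varphi_i^{(0)}(l)/(1-bp_i)$), the shift relation $\chi'=\chi+2b\psi$, and the final algebra: one indeed gets $\partial_s g_{l+1}=g_l[\beta+4b\gamma+4b^2\alpha+2b\alpha\beta-2b\gamma^2]$ and then $(D_s-2b)g_{l+1}\cdot g_l=2b\,g_l^2[2\gamma-\gamma^2-1]=-2b\,f_l^2$. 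The paper instead writes $\partial_s g_l$, $g_{l+1}$, $f_l$ and $(\partial_s-2b)g_{l+1}$ as bordered $(2N+1)$- and $(2N+2)$-row determinants and applies Jacobi's determinant identity to obtain $(\partial_s-2b)g_{l+1}\times g_l=g_{l+1}\times\partial_s g_l-(-2bf_l)\times(-f_l)$ in one stroke --- exactly the ``alternative route'' you mention at the end. What your argument buys is elementary transparency: the matrix-determinant lemma and Sherman--Morrison reduce the bilinear relation to scalar algebra in $\alpha,\beta,\gamma$, and the perfect-square structure of the right-hand side ($f_l^2=g_l^2(1-\gamma)^2$) becomes visible rather than emerging from an identity. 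Two caveats you should make explicit: (i) you need $M(l)$ invertible, which is only generic --- but since both sides of \eqref{Bilinear1} are polynomial in the entries of $M(l)$ and their $s$-derivatives, the identity extends by continuity from the open dense set where $g_l\neq 0$; (ii) the symmetry $C_{i,j}=C_{j,i}$ is not a convenience but essential, since without it you would obtain $-2b\,g_l^2(1-\psi^{T}M^{-1}\chi)(1-\chi^{T}M^{-1}\psi)$, which collapses to $-2bf_l^2$ only because $M(l)$ is symmetric; you flag this, which is good. What the paper's route buys is that it is inverse-free (a pure polynomial identity, no genericity argument needed) and, more importantly, its bordered-determinant technique is the same machinery that is then used for the pfaffian identities in the next lemma (Eqs. \eqref{Bilinear2}--\eqref{Bilinear3}), where a Sherman--Morrison-type scalar reduction is not available.
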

\begin{proof}
It can be easily verified that
\[
\partial_s m_{ij}(l)=\varphi_i^{(-1)}(l)\varphi_j^{(-1)}(l)\,,
\]
\[
m_{ij}(l+1)=m_{ij}(l)
+\frac{2b}{(1-bp_i)(1-bp_j)}\varphi_i^{(0)}(l)\varphi_j^{(0)}(l)\,,
\]
and
\[
m_{ij}'(l)=m_{ij}(l)
-\frac{1}{1-bp_j}\varphi_i^{(-1)}(l)\varphi_j^{(0)}(l)\,.
\]
Then by using the following formulas for a $N \times N$ determinant $M$ with $M_{ij}$ denoting the cofactor of the element $m_{ij}$
$$
\frac{\partial}{\partial s} |M| = \sum_{i,j=1}^N \frac{\partial m_{ij}}{\partial s} M_{ij},
\quad \left|\matrix{
m_{ij} &a_i \cr
b_j & d} \right| = d|M| - \sum_{i,j=1}^N a_ib_j M_{ij}  \,,
$$
we have
\begin{equation}
  \partial_s g_{l}=\left|\matrix{
m_{ij}(l) &\varphi_i^{(-1)}(l) \cr
-\varphi_j^{(-1)}(l) &0}\right|\,,
\label{rule1}
\end{equation}

\begin{equation}
 g_{l+1}=\left|\matrix{
m_{ij}(l) &\displaystyle\frac{2b}{1-bp_i}\varphi_i^{(0)}(l) \cr
\displaystyle -\frac{1}{1-bp_j}\varphi_j^{(0)}(l) &1}\right|\,,
\label{rule2}
\end{equation}

\begin{equation}
f_{l}=\left|\matrix{
m_{ij}(l) &\varphi_i^{(-1)}(l) \cr
\displaystyle\frac{1}{1-bp_j}\varphi_j^{(0)}(l) &1}\right|
=\left|\matrix{
m_{ij}(l) &\displaystyle\frac{1}{1-bp_i}\varphi_i^{(0)}(l) \cr
\varphi_j^{(-1)}(l) &1}\right|\,.
\label{rule3}
\end{equation}

Furthermore, we can show
\begin{eqnarray}
&&(\partial_s-2b)g_{l+1}=\left|\matrix{
m_{ij}(l) &\varphi_i^{(-1)}(l)
&\displaystyle\frac{2b}{1-bp_i}\varphi_i^{(0)}(l) \cr
-\varphi_j^{(-1)}(l) &0 &0 \cr
\displaystyle -\frac{1}{1-bp_j}\varphi_j^{(0)}(l) &0 &1}\right| \nonumber
\\&&\qquad
+\left|\matrix{
m_{ij}(l)
&\displaystyle \frac{2b(\partial_s-b)}{1-bp_i}\varphi_i^{(0)}(l) \cr
\displaystyle -\frac{1}{1-bp_j}\varphi_j^{(0)}(l) &-b}\right|
+\left|\matrix{
m_{ij}(l) &\displaystyle\frac{2b}{1-bp_i}\varphi_i^{(0)}(l) \cr
\displaystyle -\frac{(\partial_s-b)}{1-bp_j}\varphi_j^{(0)}(l)
&-b}\right| \nonumber
\\&&\qquad
=\left|\matrix{
m_{ij}(l) &\varphi_i^{(-1)}(l)
&\displaystyle\frac{2b}{1-bp_i}\varphi_i^{(0)}(l) \cr
-\varphi_j^{(-1)}(l) &0 &0 \cr
\displaystyle -\frac{1}{1-bp_j}\varphi_j^{(0)}(l) &0 &1}\right| \nonumber
\\&&\qquad
+\left|\matrix{
m_{ij}(l)
&\displaystyle 2b\varphi_i^{(-1)}(l) \cr
\displaystyle -\frac{1}{1-bp_j}\varphi_j^{(0)}(l) &-b}\right|
+\left|\matrix{
m_{ij}(l) &\displaystyle\frac{2b}{1-bp_i}\varphi_i^{(0)}(l) \cr
\displaystyle -\varphi_j^{(-1)}(l) &-b}\right| \nonumber
\\&&\qquad
=\left|\matrix{
m_{ij}(l) &\varphi_i^{(-1)}(l)
&\displaystyle\frac{2b}{1-bp_i}\varphi_i^{(0)}(l) \cr
-\varphi_j^{(-1)}(l) &0 &-2b \cr
\displaystyle -\frac{1}{1-bp_j}\varphi_j^{(0)}(l) &-1 &1}\right|\,.
\label{rule4}
\end{eqnarray}
By using the Jacobi's identity  of determinant and the relations (\ref{rule1})--(\ref{rule4}),
we obtain
\[
(\partial_s-2b)g_{l+1}\times g_{l}
=g_{l+1}\times\partial_sg_{l}-(-2bf_{l})\times(-f_{l})\,,
\]
which is nothing but Eq. (\ref{Bilinear1}).
\end{proof}
\begin{remark}
Eq. (\ref{Bilinear1}) is an integrable discretization of the bilinear equation (\ref{3CToda-bilinear3}) in $y$-direction. Note that $2b$ is the mesh size. In the limit of $b \to 0$, we have
$$
f_{l} \to f, \quad g_{l} \to g\,, \quad g_{l+1} \to g + 2b g_{y}\,,
$$
then it follows
$$
\frac{1}{2b} D_s g_{l+1}\cdot g_{l} \to \frac 12 D_sD_y g \cdot g\,.
$$
Therefore, equation (\ref{Bilinear1}) converges to equation (\ref{3CToda-bilinear3}) as $b \to 0$.
\end{remark}
Next, we perform reduction in order to mimic  the period 3-reduction of CKP/BKP hierarchy in the continuous case. To this end, we let $C_{i,j}$ take a special value as follows
\begin{equation}
C_{i,j}=\delta_{j,2N+1-i}c_i,
\quad
c_{2N+1-i}=c_i\,,
\end{equation}
and further assume
\begin{equation}
c_{i,j}=-C_{i,j} \frac{2p_i^2}{p_j}\frac{1-bp_j}{1+bp_i}\,.
\end{equation}
By imposing a reduction condition
\begin{equation}
p_i^3 (1-b^2p_{2N+1-i}^2) = - p_{2N+1-i}^3 (1-b^2p_i^2)\,,
\end{equation}
which can be written as
$$
\frac{p^2_i(1-bp_{2N+1-i})}{p_{2N+1-i}(1+bp_i)}
=-\frac{p_{2N+1-i}^2(1-bp_i)}{p_i(1+bp_{2N+1-i})}\,,
$$
it then follows
\begin{eqnarray*}
&&c_{i,j}=-\delta_{j,2N+1-i}c_i
\frac{2p_i^2}{p_{2N+1-i}} \frac{1-bp_{2N+1-i}}{1+bp_i}
\\&&\qquad
=\delta_{i,2N+1-j} c_{2N+1-i}
\frac{2p_{2N+1-i}^2}{p_i} \frac{1-bp_i}{1+bp_{2N+1-i}}
\\&&\qquad
=-c_{j,i}\,.
\end{eqnarray*}
Thus, we can define a pfaffian
$$
\tau_{l}= {\rm Pf} (1,2,\cdots,2N)_l\,
$$
whose elements are
$$
(i,j)_l=c_{i,j}
+\frac{p_i-p_j}{p_i+p_j}\varphi_i^{(0)}(l)\varphi_j^{(0)}(l)\,.
$$
The relations between the pfaffian $\tau_l$ and the Gram-type determinants $f_l$, $g_l$ are stated by the following lemma.
\begin{lemma}
\begin{equation}\label{Bilinear2}
 (D_s-b)\tau_{l+1}\cdot \tau_{l}=-bc' g_{l+1}\,,
\end{equation}
\begin{equation}\label{Bilinear3}
 \tau^2_{l}= c' f_l\,,
\end{equation}
where
\begin{equation*}
c'=\prod_{i=1}^{2N}2p_i\frac{1-bp_i}{1+bp_i}\,.
\end{equation*}
\end{lemma}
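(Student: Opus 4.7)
My plan proceeds in two stages, corresponding to the two identities of the lemma.

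For the second identity $\tau_l^2=c'f_l$, I would invoke the basic fact that for any $2N\times 2N$ antisymmetric matrix $A$ one has ${\rm Pf}(A)^2=\det A$. Antisymmetry of $A_l:=[(i,j)_l]$ has already been established via the reduction $p_i^3(1-b^2p_{2N+1-i}^2)=-p_{2N+1-i}^3(1-b^2p_i^2)$, so $\tau_l^2=\det A_l$. It then remains to identify $\det A_l$ with $c'\det[m'_{ij}(l)]=c'f_l$. Both matrices share a rank-one-plus-constant structure: the variable parts are proportional to $\varphi_i^{(0)}(l)\varphi_j^{(0)}(l)$, while the constant parts live on the anti-diagonal $j=2N+1-i$. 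The prescription $c_{ij}=-C_{ij}\,\tfrac{2p_i^2(1-bp_j)}{p_j(1+bp_i)}$ is engineered so that, after diagonal row/column scalings--with row factors proportional to $-2p_i^2/(1+bp_i)$ and column factors proportional to $(1-bp_j)/p_j$--combined with the reduction on the $p_i$'s to align anti-diagonal and off-anti-diagonal entries simultaneously, the two matrices become equal. Tracking the accumulated determinant factor then produces exactly $\prod_i 2p_i(1-bp_i)/(1+bp_i)=c'$.

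For the first identity $(D_s-b)\tau_{l+1}\cdot\tau_l=-bc'g_{l+1}$, I would mirror the strategy of Lemma~2.2 but in the pfaffian setting. Concretely, one introduces auxiliary bracket labels (say $d_0,d_{-1}$ and a shift label $\sigma$) whose brackets with the basic indices $1,\ldots,2N$ encode the identities $\partial_s\varphi_i^{(0)}(l)=\varphi_i^{(-1)}(l)$ and the discrete shift $\varphi_i^{(0)}(l+1)=\frac{1+bp_i}{1-bp_i}\varphi_i^{(0)}(l)$. This yields pfaffian representations of $\tau_l$, $\tau_{l+1}$, $\partial_s\tau_l$, and $\partial_s\tau_{l+1}$ over a common extended index set. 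Applying a Pl\"ucker-type pfaffian identity of the form
\begin{equation*}
{\rm Pf}(a,b,c,d,\bullet)\,{\rm Pf}(\bullet)={\rm Pf}(a,b,\bullet)\,{\rm Pf}(c,d,\bullet)-{\rm Pf}(a,c,\bullet)\,{\rm Pf}(b,d,\bullet)+{\rm Pf}(a,d,\bullet)\,{\rm Pf}(b,c,\bullet)
\end{equation*}
to these extended pfaffians, and identifying the resulting quadratic pfaffian expressions with the corresponding $\tau$-quantities--and with a square root of $g_{l+1}$, afforded by $\tau_l^2=c'f_l$ proven in the first stage together with the bordered representation \eqref{rule2} of $g_{l+1}$--should collapse directly to the desired bilinear relation.

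The main obstacle is the combinatorial bookkeeping in the second stage: selecting the right auxiliary brackets so that the four-term Pl\"ucker identity produces precisely the combination $(D_s-b)\tau_{l+1}\cdot\tau_l+bc'g_{l+1}$ with no residual terms. In particular, the ``$-b$'' shift in $D_s-b$ is expected to originate from the fact that $\partial_s\varphi_i^{(0)}(l+1)$ and the raw bracket value for $\varphi_i^{(-1)}(l+1)$ differ by the multiplicative factor $(1+bp_i)/(1-bp_i)$; absorbing this asymmetry uniformly across all $i$ is the subtlest part of the argument. By contrast, the gauge verification in the first stage is purely algebraic and should be routine once the correct row/column multipliers are identified.
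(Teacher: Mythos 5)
Your plan for \eqref{Bilinear3} has a genuine gap at its central step. You claim that after diagonal row/column scalings the matrices $[m'_{ij}(l)]$ and $[(i,j)_l]$ ``become equal''; they cannot. The diagonal entries of the first are $m'_{ii}(l)=-\frac{1+bp_i}{2p_i(1-bp_i)}\bigl(\varphi_i^{(0)}(l)\bigr)^2\neq 0$ while $(i,i)_l=0$, and more generally the variable part of $(i,j)_l$ carries the factor $p_i-p_j$, which is not of the separable form $u_iv_j$ that row/column scalings can produce. What the scalings actually yield (and what the paper computes) is $c'f_l=\det\bigl(c_{i,j}+\bigl(\frac{p_i-p_j}{p_i+p_j}-1\bigr)\varphi_i^{(0)}(l)\varphi_j^{(0)}(l)\bigr)$, i.e.\ the antisymmetric matrix $[(i,j)_l]$ perturbed by the symmetric rank-one term $-\varphi_i^{(0)}(l)\varphi_j^{(0)}(l)$. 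The missing idea is that such a symmetric rank-one perturbation leaves the determinant of an even-dimensional antisymmetric matrix unchanged (since $v^{\rm T}{\rm adj}(A)v=0$ when ${\rm adj}(A)$ is antisymmetric), after which $\det={\rm Pf}[(i,j)_l]^2=\tau_l^2$ as desired. Your argument stops exactly where this nontrivial content is needed.

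For \eqref{Bilinear2} your opening move — representing $\tau_l$, $\tau_{l+1}$, $\partial_s\tau_l$ and $(\partial_s-b)\tau_{l+1}$ as bordered pfaffians via the differential and shift rules for the entries — agrees with the paper, but the concluding identity you choose is the wrong tool. The four-term Pl\"ucker pfaffian identity produces a sum of products of pfaffians, whereas the target right-hand side $-bc'g_{l+1}$ is a determinant of the symmetric Gram matrix $m_{ij}(l+1)$; there is no reason it factors into two pfaffians over your extended index set, and your proposed ``square root of $g_{l+1}$'' conflates $f$ and $g$: the relation $\tau_l^2=c'f_l$ provides a square root of $f_l$, not of $g_{l+1}$ (it is $f_l$, not $g_l$, that is a perfect square in this scheme, cf.\ \eqref{Bilinear1}). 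The paper instead applies the determinant--pfaffian identity \eqref{pfaffan-identity2} to $\tau_l(\partial_s-b)\tau_{l+1}-\tau_{l+1}\partial_s\tau_l$, converting that combination into a single $(2N+2)\times(2N+2)$ bordered determinant, which elementary row/column operations, the definition of $c_{i,j}$ in terms of $C_{i,j}$, and the reduction condition collapse directly to $-bc'\det\bigl(C_{i,j}+\frac{1}{p_i+p_j}\varphi_i^{(0)}\varphi_j^{(0)}\bigr)=-bc'\,g_{l+1}$; no square root of $g$ is ever required. (The Pl\"ucker-type identity you invoke is the one the paper reserves for Lemma 3.1, where both sides are genuinely bilinear in $\tau$.) Unless you replace your Pl\"ucker step by an identity of this determinant--pfaffian type, the final identification with $g_{l+1}$ will not go through.
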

\begin{proof}
We firstly list two pfaffian identities which will be used in the process of proof
\begin{equation}\label{pfaffan-identity1}
\mathop {\rm Pf}_{1\le i< j \le 2N} \left(\delta \alpha_{ij} - a_i b_j + a_j b_i \right)
={\rm Pf}\pmatrix{
\matrix{\alpha_{ij}} &a_i & b_i \cr
&&\delta}\,,
\end{equation}
\begin{eqnarray}
 && \mathop {\rm det}_{1\le i, j \le 2N} \pmatrix{
\matrix{\alpha_{ij} } &a_i & b_i \cr
c_j & \alpha & \beta \cr
d_j & \gamma & \delta} = \mathop{\rm Pf} \left(\alpha \alpha_{ij} - a_i c_j + a_j c_i \right)
 \mathop{\rm Pf} \left(\delta \alpha_{ij} - b_i d_j + b_j d_i \right) \nonumber \\
   &&  \qquad  -\mathop{\rm Pf} \left(\gamma \alpha_{ij} - a_i d_j + a_j d_i \right)
   \mathop{\rm Pf} \left(\beta \alpha_{ij} - b_i c_j + b_j c_i \right)\,.
\label{pfaffan-identity2}
\end{eqnarray}
Since
$$
\partial_s (i,j)_{l}
=\varphi_i^{(0)}(l)\varphi_j^{(-1)}(l)
-\varphi_i^{(-1)}(l)\varphi_j^{(0)}(l)\,,
$$

$$
(i,j)_{l+1}=(i,j)_{l}
+\varphi_i^{(0)}(l+1)\varphi_j^{(0)}(l)
-\varphi_i^{(0)}(l)\varphi_j^{(0)}(l+1)\,,
$$

$$
(\partial_\tau-b) )(i,j)_{l+1}=-b (i,j)_{l}
+\varphi_i^{(0)}(l+1)\varphi_j^{(-1)}(l)
-\varphi_i^{(-1)}(l)\varphi_j^{(0)}(l+1)\,,
$$
we have
\begin{equation}
  \tau_{l}
={\rm Pf}\pmatrix{
\matrix{(i,j)_{l} \cr {}} &\varphi_i^{(0)}(l) &\varphi_i^{(0)}(l) \cr
&&1}\,,
\end{equation}

\begin{equation}
  \partial_s \tau_{l}
={\rm Pf}\pmatrix{
\matrix{(i,j)_{l} \cr {}} &\varphi_i^{(-1)}(l) &\varphi_i^{(0)}(l) \cr
&&0}\,,
\end{equation}

\begin{equation}
 \tau_{l+1}
={\rm Pf}\pmatrix{
\matrix{(i,j)_{l} \cr {}} &\varphi_i^{(0)}(l) &\varphi_i^{(0)}(l+1) \cr
&&1}\,,
\end{equation}

\begin{equation}
(\partial_s-b) \tau_{l+1}=
{\rm Pf}\pmatrix{
\matrix{(i,j)_{l} \cr {}} &\varphi_i^{(-1)}(l)
&\varphi_i^{(0)}(l+1) \cr
&&-b}\,,
\end{equation}
by referring to the identity (\ref{pfaffan-identity1}). Furthermore,  by using the pfaffian identify (\ref{pfaffan-identity2})
\begin{eqnarray}
\fl (D_s-b)\tau_{l+1}\cdot \tau_{l} && =\tau_l (\partial_s-b)\tau_{l+1}- \tau_{l+1} \partial_s \tau_{l} \nonumber \\ &&
={\rm Pf}\pmatrix{
\matrix{(i,j)_{l} \cr {}} &\varphi_i^{(-1)}(l)
&\varphi_i^{(0)}(l+1) \cr
&&-b} {\rm Pf}\pmatrix{
\matrix{(i,j)_{l} \cr {}} &\varphi_i^{(0)}(l) &\varphi_i^{(0)}(l+1) \cr
&&1} \nonumber
\\&& \qquad -{\rm Pf}\pmatrix{
\matrix{(i,j)_{l} \cr {}} &\varphi_i^{(0)}(l) &\varphi_i^{(0)}(l) \cr
&&1} {\rm Pf}\pmatrix{
\matrix{(i,j)_{l} \cr {}} &\varphi_i^{(-1)}(l) &\varphi_i^{(0)}(l) \cr
&&0} \nonumber
\\&&
=\left|\matrix{
m_{ij}(l) &\varphi_i^{(-1)}(l)
& \varphi_i^{(0)}(l) \cr
\varphi_j^{(0)}(l+1) & -b  &1 \cr
\varphi_j^{(0)}(l) &0 &1}\right| \nonumber
\\&&
=\left|\matrix{
m_{ij}(l)-\varphi_i^{(0)}(l)\varphi_j^{(0)}(l) &\varphi_i^{(-1)}(l)
& \varphi_i^{(0)}(l)\cr
\varphi_j^{(0)}(l+1)-\varphi_j^{(0)}(l) & -b  &1 \cr
0 &0 &1}\right| \nonumber
\\&&  =\left|\matrix{
m_{ij}(l)-\varphi_i^{(1)}(l)\varphi_j^{(1)}(l)   &\varphi_i^{(-1)}(l)
 \cr
b( \varphi_j^{(0)}(l+1)+\varphi_j^{(0)}(l))   & -b}\right| \nonumber
\\&&
= -b \det\left(
m_{ij}(l)-\varphi_i^{(0)}(l)\varphi_j^{(0)}(l) + \varphi_i^{(-1)}(l) (\varphi_j^{(1)}(l+1)+\varphi_j^{(1)}(l))\right)\nonumber
\\&&
= -b \det\left(
c_{i,j}+ \frac{2p^2_j (1+bp_i)}{p_i (p_i+p_j) (1-bp_j) }\varphi_i^{(0)}(l)\varphi_j^{(0)}(l)\right) \nonumber
\\&&
= -bc' \det\left(
\frac{p_i (1-bp_j) }{2p^2_j (1+bp_i) } c_{i,j}+ \frac{1}{p_i+p_j}\varphi_i^{(0)}(l)\varphi_j^{(0)}(l)\right) \nonumber
\\&&
= -bc' \det(
 C_{i,j}+ \frac{1}{p_i+p_j}\varphi_i^{(0)}(l)\varphi_j^{(0)}(l))  \nonumber
\\&&
= -bc' F_{l+1}\,.
\end{eqnarray}
Thus Eq. (\ref{Bilinear2}) is proved. Next, we prove the relation (\ref{Bilinear3}).
\begin{eqnarray}
&&f_l=\det\left(\delta_{j,2N+1-i} c_i-\frac{p_j}{p_i}\frac{1}{p_i+p_j}\frac{1+bp_i}{1-bp_j}
\varphi_i^{(0)}(l)\varphi_j^{(0)}(l)\right) \nonumber
\\
&&=\prod_{i=1}^{2N}\frac{1+bp_i}{2p_i(1-bp_i)} \det\left(-\delta_{j,2N+1-i}c_i\frac{2p_i^2}{p_j}\frac{1-bp_j}{1+bp_i}+\frac{2p_i}{p_i+p_j}
\varphi_i^{(0)}(l)\varphi_j^{(0)}(l)\right) \nonumber \\
&&=\frac{1}{c'} \det\left(c_{i,j}+\left( \frac{p_i-p_j}{p_i+p_j}-1\right)
\varphi_i^{(0)}(l)\varphi_j^{(0)}(l)\right) \nonumber \\
&& =  \frac{1}{c'} \left[ {\rm Pf} \Big(c_{i,j}+
\frac{p_i-p_j}{p_i+p_j} \varphi_i^{(0)}(l)\varphi_j^{(0)}(l)
\Big)\right]^2\,.
\end{eqnarray}
Therefore Eq. (\ref{Bilinear3}) holds.
\end{proof}
\begin{remark}
Obviously, Eq. (\ref{Bilinear3}) converges to (\ref{3CToda-bilinear3}) as $b \to 0$ since $\tau_l \to \tau$, $f_l \to f$ and $c' \to c$ under this limit.
\end{remark}
\begin{remark}
Multiplying both sides of Eq. (\ref{Bilinear2}) by $2\tau_l \tau_{l+1}$, we have
\begin{equation*}
 (D_s-2b)\tau^2_{l+1}\cdot \tau^2_{l}=-2bc' g_{l+1}\tau_{l+1} \tau_l \,
\end{equation*}
by using a bilinear identify $D_s f^2 \cdot g^2=2fg D_s f \cdot g$. Furthermore, by referring to the relation (\ref{3CToda-bilinear3}), we have
\begin{equation*}
 \left(\frac{1}{2b} D_s-1\right)f^2_{l+1}\cdot f^2_{l}=-\frac{1}{c'} g_{l+1}\tau_{l+1} \tau_l \,
\end{equation*}
which converges to (\ref{3CToda-bilinear1}) as $b \to 0$ since $g_{l+1} \to g$ and $\tau_{l+1} \tau_l/c' \to \tau^2/c =f$ under this limit.
\end{remark}
\subsection{Integrable semi-discretization of the short wave limit of the DP equation (\ref{short-DP-eq})}
Summarizing what we have discussed in the previous subsection, the following three relations
\begin{eqnarray}
&& (D_s-2b)g_{l+1}\cdot g_{l} =  -2b f^2_{l}\,, \label{VE-sdbilinear1} \\
   && (D_s-b)\tau_{l+1}\cdot \tau_{l}=-bc' g_{l+1}\,, \label{VE-sdbilinear2}\\
   && \tau^2_l = c' f_l \label{VE-sdbilinear3} \,.
\end{eqnarray}
constitute the semi-discrete analogue of bilinear equations  (\ref{3CToda-bilinear1})--(\ref{3CToda-bilinear3}).
Let us construct integrable semi-discretization of the reduced Ostrovsky equation based on bilinear equations (\ref{VE-sdbilinear1})--(\ref{VE-sdbilinear3}).
First, we rewrite Eqs. (\ref{VE-sdbilinear1}) and (\ref{VE-sdbilinear2}) into
\begin{equation}
\left(\ln\frac{g_{l+1}}{g_l}\right)_s-2b=-2b\frac{f^2_l}{g_{l+1}g_l}\,,
\label{BL4}
\end{equation}
and \begin{equation}
\left(\ln\frac{\tau_{l+1}}{\tau_l}\right)_s-b=-bc' \frac{g_{l+1}}{\tau_{l+1}\tau_l}\,,
\label{BL5}
\end{equation}
respectively.
Introducing a discrete  hodograph transformation
\begin{equation}
\label{sd_hodograph_trf}
x_l=2lb-2(\ln \tau_l)_s, \quad t=s\,,
\end{equation}
and a dependent variable transformation
\begin{equation}
\label{sd_u_trf}
u_l= -2 (\ln \tau_l)_{ss}=- (\ln f_l)_{ss}\,,
\end{equation}
it then follows that the nonuniform mesh, which is defined by $\delta_l= x_{l+1}-x_l$, can be expressed as
\begin{equation}\label{mesh}
\delta_l= 2b -2 \left(\ln \frac{\tau_{l+1}}{\tau_l} \right)_s =2bc' \frac{g_{l+1}}{\tau_{l+1}\tau_l}\,,
\end{equation}
with the use of (\ref{BL5}). Differentiating  Eq. (\ref{mesh}) with respect to $s$, one obtains
\begin{equation}\label{sd1_VE2}
    \frac{d \delta_l}{d s} = -2\left(\ln \frac{\tau_{l+1}}{\tau_l} \right)_{ss} = u_{l+1}-u_{l}\,.
\end{equation}
Introducing an auxiliary variable $r_l = f_l/g_l$, we then have
\begin{equation}\label{sd_VEtt}
 \frac{4}{\delta^2_l} =  \frac{1}{b^2} \frac{g_{l}}{g_{l+1}} r_lr_{l+1}\,,
\end{equation}
where (\ref{VE-sdbilinear3}) is used. Further, one obtains
\begin{equation}
  \left(\ln \frac{r_{l+1}}{r_l} \right)_{s} +\delta_{l}= 2b \frac{f^2_l}{g_{l+1}g_l}\,
  \label{sd-VEt3}
\end{equation}
 by referring to (\ref{BL5}) and (\ref{BL4}).
Taking the logarithmic derivative of (\ref{sd_VEtt}) with respect to $s$ leads to
\begin{equation} \label{sd_VEttt}
 \left(\ln {r_{l+1}}{r_l} \right)_{s}- \left(\ln \frac{g_{l+1}}{g_l} \right)_{s} = - \frac{2}{\delta_l} \frac{d \delta_l}{d s}\,.
\end{equation}
Substituting  Eq.(\ref{sd1_VE2}) into Eq. (\ref{sd_VEttt}) and referring  to Eqs. and (\ref{BL4}) and (\ref{sd-VEt3}), one obtains
\begin{eqnarray}
   \frac{u_{l+1}-u_{l}}{\delta_l}&=& -\frac 12 \left(\ln {r_{l+1}}{r_l} \right)_{s} + \frac 12 \left(\ln \frac{g_{l+1}}{g_l} \right)_{s} \nonumber \\
   &=& -\frac 12 \left(\ln r_{l+1}{r_l} \right)_{s} + b - b \frac {f^2_{l}}{g_l g_{l+1}} \nonumber \\
   &=&  -\frac 12 \left(\ln {r_{l+1}}{r_l} \right)_{s} + b - \frac 12 \left(\ln \frac {r_{l+1}}{r_l} \right)_{s} - \frac 12 \delta_l  \nonumber\\
   &=& - \left(\ln r_{l+1} \right)_{s} + b - \frac 12 \delta_l\,,
\end{eqnarray}
which can be recast into
\begin{equation}\label{sd-VEt2}
    \left(\ln r_{l+1} \right)_{s}= -\frac{u_{l+1}-u_{l}}{\delta_l} + b - \frac 12 \delta_l\,.
\end{equation}
A substitution of (\ref{sd-VEt2}) back into (\ref{sd-VEt3}) leads to
\begin{equation}\label{sd-VEt4}
   -\frac{u_{l+1}-u_{l}}{\delta_l} + \frac{u_{l}-u_{l-1}}{\delta_{l-1}} +  \frac 12 \delta_l + \frac 12 \delta_{l-1} = \frac{8b^3} {\delta^2_{l}} \frac{r_{l}}{r_{l+1}}\,.
\end{equation}
Defining
$$
m_l = \frac{2}{\delta_{l}+\delta_{l-1}} \left( -\frac{u_{l+1}-u_{l}}{\delta_l} + \frac{u_{l}-u_{l-1}}{\delta_{l-1}} \right) + 1\,,
$$
and taking the logarithmic derivative on both sides of (\ref{sd-VEt4}), we have
\begin{eqnarray}
  && \frac{d\,\ln m_l}{d\,s}  = (\ln r_l)_s - (\ln r_{l+1})_s -\frac {2}{\delta_l} \frac{d \delta_l}{d s}
  - \frac{d}{d s} (\delta_l+\delta_{l-1}) \nonumber \\
   &=&   -\frac{u_{l}-u_{l-1}}{\delta_{l-1}}  - \frac 12 \delta_{l-1}
    + \frac{u_{l+1}-u_{l}}{\delta_{l}}  + \frac 12 \delta_{l}  -\frac {2(u_{l+1}-u_{l})}{\delta_l} -\frac{u_{l+1}-u_{l-1}}{\delta_l+\delta_{l-1}}\nonumber \\
   &=&  -\frac{u_{l}-u_{l-1}}{\delta_{l-1}}
    - \frac{u_{l+1}-u_{l}}{\delta_{l}} -\frac{u_{l+1}-u_{l-1}}{\delta_l+\delta_{l-1}}- \frac 12 (\delta_{l}-\delta_{l-1}) \,.
\end{eqnarray}

As a result, by defining forward difference and average operators
$$
\Delta u_l = \frac{u_{l+1}-u_l}{\delta_l}, \quad M u_l =\frac{u_{l}+u_{l-1}}{2}\,,
$$
we can summarize what we have deduced into the following theorem.
\begin{theorem}
The semi-discrete analogue  of the short wave limit of the DP equation
\begin{equation}
\label{sd1-VE}
\left\{
\begin{array}{l} \displaystyle
 \frac{d\,m_l}{d \,s}
  = m_l \left( -2M \Delta u_l -\frac{M (\delta_l \Delta u_l)}{M \delta_l} -  \frac 12 (\delta_{l}  -\delta_{l-1})\right)\,, \\
\displaystyle \frac{d \, \delta_l}{d\, s} = u_{l+1}-u_{l}\, \\
\displaystyle m_l= -2\frac{M \Delta u_l}{M \delta_l}+1\,,
\end{array}\right.
\end{equation}
is determined from the following equations
\begin{equation*}
\left\{
\begin{array}{l} \displaystyle
 (D_s-2b)g_{l+1}\cdot g_{l} =  -2b f^2_{l}\,, \\
\displaystyle (D_s-b)f_{l+1}\cdot f_{l}=-bc' g_{l+1}\,,\\
\displaystyle \tau^2_l = c' f_l  \,,
\end{array}\right.
\end{equation*}
through discrete hodograph transformation $x_l=2lb -2 \left(\ln {\tau_l} \right)_{s}$, $\delta_l=x_{l+1}-x_l$, $t=s$ and dependent variable transformation $u_l= -2 (\ln \tau_l)_{ss}=- (\ln f_l)_{ss}$.
\end{theorem}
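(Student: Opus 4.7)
The plan is to realize each of the three equations of the system (\ref{sd1-VE}) as an algebraic consequence of the semi-discrete bilinear equations (\ref{VE-sdbilinear1})--(\ref{VE-sdbilinear3}) via the transformations (\ref{sd_hodograph_trf}) and (\ref{sd_u_trf}). In outline, I would translate everything into logarithmic $s$-derivatives of the tau-functions, introduce the auxiliary variable $r_l=f_l/g_l$, and then systematically eliminate all $(\ln\tau_l)_s$, $(\ln g_l)_s$, $(\ln r_l)_s$ in favor of $\delta_l$, $u_l$, and $m_l$.

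First, I would rewrite (\ref{VE-sdbilinear1})--(\ref{VE-sdbilinear2}) in the logarithmic forms (\ref{BL4}) and (\ref{BL5}). Combining (\ref{BL5}) with the discrete hodograph transformation $x_l=2lb-2(\ln\tau_l)_s$ expresses the mesh $\delta_l=x_{l+1}-x_l$ as $2bc'g_{l+1}/(\tau_{l+1}\tau_l)$; differentiating in $s$ and applying $u_l=-2(\ln\tau_l)_{ss}$ immediately yields the second equation $d\delta_l/ds = u_{l+1}-u_l$ of the system.

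Next, the reduction $\tau_l^2=c'f_l$ turns the above formula for $\delta_l$ into the identity $4/\delta_l^2 = (g_l/g_{l+1})r_lr_{l+1}/b^2$, which is (\ref{sd_VEtt}), while combining (\ref{BL4}) and (\ref{BL5}) gives $(\ln(r_{l+1}/r_l))_s + \delta_l = 2bf_l^2/(g_{l+1}g_l)$, i.e.\ (\ref{sd-VEt3}). Taking the logarithmic $s$-derivative of (\ref{sd_VEtt}), substituting $d\delta_l/ds = u_{l+1}-u_l$ and (\ref{BL4}), and simplifying, solves for $(\ln r_{l+1})_s$ in the closed form (\ref{sd-VEt2}); back-substituting into (\ref{sd-VEt3}) (with an index shift $l\to l-1$) produces the three-point relation (\ref{sd-VEt4}). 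Dividing (\ref{sd-VEt4}) by $M\delta_l=(\delta_l+\delta_{l-1})/2$ is exactly the algebraic definition of $m_l$ asserted in the third equation of (\ref{sd1-VE}), and it simultaneously identifies $m_l$ with the combination $8b^3 r_l/(\delta_l^2\,(M\delta_l)\,r_{l+1})$ on the other side of (\ref{sd-VEt4}).

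Finally, taking the logarithmic $s$-derivative of this last expression for $m_l$, and replacing $(\ln r_l)_s$, $(\ln r_{l+1})_s$, $d\delta_l/ds$ and $d\delta_{l-1}/ds$ by the formulas already derived, produces the evolution equation for $m_l$ appearing as the first component of (\ref{sd1-VE}). The main obstacle is purely organizational bookkeeping: at every step one must consistently eliminate tau-function and $r$-derivatives in favor of $u_l$ and $\delta_l$, and the interplay between (\ref{BL4}), (\ref{BL5}) and the reduction (\ref{VE-sdbilinear3}) has to be used at essentially every substitution. Once these substitutions are tracked carefully, the entire system (\ref{sd1-VE}) drops out with no additional input beyond the three semi-discrete bilinear equations and the two transformations (\ref{sd_hodograph_trf}) and (\ref{sd_u_trf}).
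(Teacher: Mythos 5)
Your proposal follows essentially the same route as the paper: rewriting the bilinear equations in logarithmic form, using the discrete hodograph transformation to get the mesh evolution $d\delta_l/ds=u_{l+1}-u_l$, introducing $r_l=f_l/g_l$ to obtain the intermediate identities for $4/\delta_l^2$ and $(\ln(r_{l+1}/r_l))_s$, solving for $(\ln r_{l+1})_s$, back-substituting to get the three-point relation defining $m_l$, and finally taking its logarithmic $s$-derivative to produce the evolution equation for $m_l$. The steps and intermediate relations you name coincide with those in the paper's derivation, so the proposal is correct and not a different approach.
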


Let us consider the continuous limit when $b \to 0$.
The dependent variable $u$ is a function of $l$ and $s$. Meanwhile, we regard
it as a function of $x$ and $t$, where $x$ is the space coordinate
at $l$-th lattice point and $t$ is the time, defined by
$$
x=x_0+\sum_{j=0}^{l-1}\delta_j\,,\qquad t=s
$$
Then in the continuous limit, $b \to 0$ ($\delta_l \to 0$), we have
$$
2M \Delta u_l= \frac{u_{l+1}-u_l}{\delta_l}+ \frac{u_{l}-u_{l-1}}{\delta_{l-1}} \to 2u_x \,,
\quad \frac{M (\delta_l \Delta u_l)}{M \delta_l}= \frac{u_{l+1}-u_{l-1}}{\delta_l+\delta_{l-1}} \to u_x\,,
$$
$$
m_l=\frac{2}{\delta_{l}+\delta_{l-1}} \left( -\frac{u_{l+1}-u_{l}}{\delta_l} + \frac{u_{l}-u_{l-1}}{\delta_{l-1}} \right) + 1  \to  m=- u_{xx}+1\,.
$$
Moreover, since
$$
\frac{\partial x}{\partial s}
=\frac{\partial x_0}{\partial s}
+\sum_{j=0}^{l-1}\frac{\partial\delta_j}{\partial s}
=\frac{\partial x_0}{\partial s}
+\sum_{j=0}^{l-1}(u_{j+1}-u_j)
\to u \,,
$$
we then have
$$
\partial_s=\partial_t+\frac{\partial x}{\partial s} \partial_x
\to \partial_t+u\partial_x \,.
$$
Consequently, the third equation in (\ref{sd1-VE}) converges to $m=1-u_{xx}$. Whereas the first equation in (\ref{sd1-VE}) converges to
\begin{equation}
(\partial_t+u\partial_x) m  = -3m u_x\,,
\end{equation}
which is exactly the short wave limit of the DP equation (\ref{short-DP-eq}).

Based on the results in previous section, we can provide $N$-soliton solution to the semi-discrete reduced Ostrovsky equation
\begin{theorem}
The $N$-soliton solution to the semi-discrete analogue for the short wave limit of the DP equation (\ref{sd1-VE}) takes the following parametric form
$$u_l= -2 (\ln \tau_l)_{ss}, \quad  x_l=2lb -2 \left(\ln {\tau_l} \right)_{s}$$
where $\tau_l$ is a pfaffian
\begin{equation}
\label{Nsoliton-VE1}
\tau_{l}= {\rm Pf} (1,2,\cdots,2N)_l\,,
\end{equation}
whose elements are
\begin{equation}
\label{Nsoliton-VE2}
\fl (i,j)_l=c_{i,j}
+\frac{p_i-p_j}{p_i+p_j}\varphi_i^{(0)}(l)\varphi_j^{(0)}(l)\,, \quad  \varphi_i^{(n)}(l)
=p_i^n\left(\frac{1+bp_i}{1-bp_i}\right)^le^{p_i^{-1} s+\xi_{i0}}
\end{equation}
under the reduction condition
\begin{equation}
\label{Nsoliton-VE3}
p_i^3 (1-b^2p_{2N+1-i}^2) = - p_{2N+1-i}^3 (1-b^2p_i^2)\,, \quad i=1, 2, \cdots, N\,.
\end{equation}
\end{theorem}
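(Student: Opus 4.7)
The plan is to reduce the theorem to a direct application of the bilinear identities already proved in Lemmas \ref{Bilinear1} and the pair (\ref{Bilinear2})--(\ref{Bilinear3}), together with the previous theorem that derives the semi-discrete system (\ref{sd1-VE}) from the bilinear system (\ref{VE-sdbilinear1})--(\ref{VE-sdbilinear3}). Since the dependent and discrete hodograph transformations in the theorem statement are precisely those of that previous theorem, the only thing left to verify is that the pfaffian $\tau_l$ defined by (\ref{Nsoliton-VE1})--(\ref{Nsoliton-VE2}), together with the Gram-type determinants $f_l, g_l$ built from the same $\varphi_i^{(0)}(l)$, satisfies all three bilinear equations once the reduction condition (\ref{Nsoliton-VE3}) is imposed.

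First I would verify that the reduction condition (\ref{Nsoliton-VE3}) makes $\tau_l$ a genuine pfaffian, i.e. that its matrix of entries $(i,j)_l$ is antisymmetric. This requires $c_{i,j} = -c_{j,i}$, and the derivation following (\ref{Nsoliton-VE3}) in the text shows that the choice $C_{i,j} = \delta_{j,2N+1-i}c_i$ with $c_{2N+1-i} = c_i$ combined with (\ref{Nsoliton-VE3}) yields exactly this antisymmetry. I would also check that the dynamical factors $\varphi_i^{(n)}(l)$ satisfy $\partial_s \varphi_i^{(n)}(l) = \varphi_i^{(n-1)}(l)$ and $\varphi_i^{(n)}(l+1) = \tfrac{1+bp_i}{1-bp_i}\varphi_i^{(n)}(l)$, so that the derivative and shift formulas used in the proofs of the two lemmas remain valid on this restricted class of parameters.

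Next I would invoke Lemma \ref{Bilinear1} to obtain (\ref{VE-sdbilinear1}), which does not require the reduction. For (\ref{VE-sdbilinear2}) and (\ref{VE-sdbilinear3}) I would apply Lemma 2.7; these do use the special form of $C_{i,j}$ and the reduction condition in an essential way, but the algebraic identities proved there go through verbatim for the present pfaffian $\tau_l$. Hence all three bilinear relations (\ref{VE-sdbilinear1})--(\ref{VE-sdbilinear3}) are satisfied, and the previous theorem then immediately produces the semi-discrete equation (\ref{sd1-VE}) with the claimed parametric formulas for $u_l$ and $x_l$.

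The main obstacle, and the only nontrivial content, is confirming that the reduction condition (\ref{Nsoliton-VE3}) is preserved by the evolution in $s$ and by the shift in $l$ — equivalently, that the pfaffian structure survives after one has fixed the pairing $p_{2N+1-i} \leftrightarrow p_i$. Since $s$ enters only through $\xi_i$ (which evolves independently for each $i$) and $l$ only through the multiplicative factor $((1+bp_i)/(1-bp_i))^l$, the pairing relation (\ref{Nsoliton-VE3}) is a purely algebraic constraint on the spectral parameters and is manifestly $s$- and $l$-invariant. This verifies that the pfaffian expression remains a valid solution for all $s$ and all integer $l$, completing the proof.
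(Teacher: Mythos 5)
Your proposal is correct and follows essentially the same route as the paper: the theorem is stated there as an immediate consequence of the two lemmas establishing the bilinear relations (\ref{VE-sdbilinear1})--(\ref{VE-sdbilinear3}) for the Gram determinants and the reduced pfaffian, combined with the preceding theorem that converts those bilinear equations into the semi-discrete system (\ref{sd1-VE}) via the discrete hodograph and dependent variable transformations. Your additional checks (antisymmetry of $c_{i,j}$ under the reduction, the derivative/shift properties of $\varphi_i^{(n)}(l)$, and the $s$- and $l$-invariance of the constraint on the spectral parameters) are exactly the points the paper relies on implicitly, so the argument is complete.
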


\section{Integrable semi-discretization of the reduced Ostrovsky equation (\ref{vakhnenko})}
\subsection{Bilinear equation for the reduced Ostrovsky equation (\ref{vakhnenko})}
In this section, we will deduce an integrable semi-discrete analogue to the reduced Ostrovsky equation (\ref{vakhnenko}). It was pointed out by the authors \cite{FMO-VE} that a single bilinear equation (2.53) yields the reduced Ostrovsky equation (\ref{vakhnenko}). In order to be consistent with the $N$-soliton solution given in the previous section, we start with
\begin{equation}
[(D_{x_{-3}}-D_{x_{-1}}^3)D_{x_{1}}+3D_{x_{-1}}^2]\tau\cdot \tau=0\,,\label{BKP-bilinear}
\end{equation}
which is a dual bilinear equation of (2.47) in \cite{FMO-VE} for the extended BKP hierarchy.
Imposing the same period 3-reduction by requesting $D_{x_3}=D_{x_{-3}}=0$ and assuming $y=x_1$, $s=x_{-1}$, Eq. (\ref{BKP-bilinear}) is reduced to
\begin{equation} \label{vakhnenko-bilinear}
(D_{y}D_{s}^3-3D_{s}^2)\tau \cdot \tau=0\,.
\end{equation}

Prior to proceeding to the semi-discretization of Eq. (\ref{vakhnenko}), let us briefly show how the reduced Ostrovsky equation (\ref{vakhnenko}) is derived from Eq. (\ref{vakhnenko-bilinear}) through the same  hodograph transformation (\ref{hodograph}) and dependent variable transformation (\ref{u-transformation}) defined in the previous section. By defining $\rho^{-1}= 1-2(\ln \tau)_{ys}$, a conversion formula
\begin{equation}
\left\{
\begin{array}{l} \displaystyle
\frac{\partial}{\partial y} =  \frac{1}{\rho} \frac{\partial}{\partial x}\,,\\
\displaystyle
\frac{\partial}{\partial s}=\frac{\partial}{\partial t}+u\frac{\partial}{\partial x} \,,\qquad \qquad \quad
\end{array}
\right.\label{hodograph2}
\end{equation}
can be easily obtained from the hodograph transformation (\ref{hodograph}).
By using the relations
\begin{eqnarray*}
  \frac{D_{y}D_{s}^3 \tau \cdot \tau}{\tau^2} &=& 2 (\ln \tau)_{ysss} + 12 (\ln \tau)_{ss} (\ln \tau)_{ys}\,, \\
  \frac{D_{s}^2\tau \cdot \tau}{\tau^2} &=& 2(\ln \tau)_{ss}\,,
\end{eqnarray*}
Eq. (\ref{vakhnenko-bilinear}) is converted to
\begin{equation}
2 (\ln \tau)_{ysss} = 6 (\ln \tau)_{ss} (1-2(\ln \tau)_{ys})\,,
\end{equation}
 and is further reduced to
\begin{equation}
\rho u_{x_1x_{-1}} = 3u\,.
\end{equation}
With the use of the conversion formulas (\ref{hodograph2}), we finally arrive at
\begin{equation}
\partial_x (\partial_t+u\partial_x)u = 3u\,,
\end{equation}
which is exactly the reduced Ostrovsky equation (\ref{vakhnenko}).
\subsection{Semi-discrete analogue of the reduced Ostrovsky equation (\ref{vakhnenko})}
In order to obtain a discrete analogue for the bilinear equation (\ref{vakhnenko-bilinear}), we first prove a bilinear equation associated with the modified BKP.
 \begin{lemma}
 Assume a pfaffian $\tau_{l}= {\rm Pf} (1,2,\cdots,2N)_l$ with element determined by
 \begin{equation}
\label{bilinear_BKP1}
(i,j)_l=c_{i,j}
+\frac{p_i-p_j}{p_i+p_j}\varphi_i^{(0)}(l)\varphi_j^{(0)}(l)\,,
\end{equation}
where
$$
\varphi_i^{(n)}(l)
=p_i^n\left(\frac{1+bp_i}{1-bp_i}\right)^le^{\xi_i},
\quad
\xi_i=p_i^{-1} s+p_i^{-3} r+\xi_{i0}\,.
$$
Then the pfaffian $\tau_{l}$ satisfies the following bilinear equation
 \begin{equation}
 \left((D_s-b)^3-(D_r-b^3)\right)\tau_{l+1}\cdot\tau_l=0\,.
\label{bilinear_BKP2}
\end{equation}
 \end{lemma}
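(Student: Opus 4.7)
The approach is to mimic the pfaffian-identity technique used in the preceding lemma establishing (\ref{Bilinear2})--(\ref{Bilinear3}). As building blocks, I would first record the differentiation and shift rules for the pfaffian entry $(i,j)_l$:
\begin{eqnarray*}
\partial_s (i,j)_l &=& \varphi_i^{(0)}\varphi_j^{(-1)} - \varphi_i^{(-1)}\varphi_j^{(0)},\\
\partial_r (i,j)_l &=& \varphi_i^{(0)}\varphi_j^{(-3)} - 2\varphi_i^{(-1)}\varphi_j^{(-2)} + 2\varphi_i^{(-2)}\varphi_j^{(-1)} - \varphi_i^{(-3)}\varphi_j^{(0)},\\
(i,j)_{l+1} &=& (i,j)_l + \varphi_i^{(0)}(l+1)\varphi_j^{(0)}(l) - \varphi_i^{(0)}(l)\varphi_j^{(0)}(l+1),
\end{eqnarray*}
where all $\varphi$'s are at level $l$ unless indicated. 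Iterating $\partial_s$ twice more yields $\partial_s^2(i,j)_l$ and $\partial_s^3(i,j)_l$, and combining gives the crucial structural identity
$$
\partial_s^3(i,j)_l - \partial_r(i,j)_l = 3\bigl(\varphi_i^{(-1)}\varphi_j^{(-2)} - \varphi_i^{(-2)}\varphi_j^{(-1)}\bigr),
$$
which is the entry-level shadow of the continuous BKP bilinear relation and will drive the final cancellation.

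Next, I would apply the pfaffian identity (\ref{pfaffan-identity1}) to express each of $\tau_l$, $\partial_s^k\tau_l$ for $k=1,2,3$, $\partial_r\tau_l$, $\tau_{l+1}$, $(\partial_s-b)^k\tau_{l+1}$ for $k=1,2,3$, and $(\partial_r-b^3)\tau_{l+1}$ as a bordered pfaffian ${\rm Pf}(\bullet,\mathbf{a},\mathbf{b};\delta)$ whose two extra columns hold suitable $\varphi_i^{(n)}(l)$ or $\varphi_i^{(n)}(l+1)$ entries and whose corner scalar $\delta$ accumulates the appropriate power of $-b$. The choice of $(\mathbf{a},\mathbf{b},\delta)$ is forced by the rules above; in particular, the $-b$ corner that appears in $(\partial_s-b)\tau_{l+1}$ is precisely what absorbs the $b$-term produced when shifting from $\varphi_i^{(n)}(l)$ to $\varphi_i^{(n)}(l+1)$, just as in the preceding lemma.

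With these building blocks in hand, I would expand
$$
\bigl((D_s-b)^3 - (D_r-b^3)\bigr)\tau_{l+1}\cdot\tau_l = \bigl(D_s^3 - 3bD_s^2 + 3b^2 D_s - D_r\bigr)\tau_{l+1}\cdot\tau_l
$$
as an alternating sum of products ${\rm Pf}(\cdots)\,{\rm Pf}(\cdots)$ of the bordered pfaffians above. Applying the quartic pfaffian identity (\ref{pfaffan-identity2}), this sum reassembles into a single determinant of a matrix bordered by four extra rows/columns. The entry-level identity above then forces two of these four border columns to become linearly dependent inside the augmented $(i,j)_l$ block, making the determinant vanish identically, which is the desired (\ref{bilinear_BKP2}).

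The main obstacle is the bookkeeping of $b$-powers in the middle step. One must verify that the three polynomial coefficients $-3b$, $3b^2$ and $-b^3$ inside $(D_s-b)^3$ together with the $-b^3$ inside $(D_r-b^3)$ exactly cancel the $b$-terms produced when $(\partial_s-b)^k$ acts on $\tau_{l+1}$ and when the shift $\varphi_i^{(n)}(l+1) = \tfrac{1+bp_i}{1-bp_i}\varphi_i^{(n)}(l)$ is unfolded through the corner scalars of the bordered pfaffians. Once every term is matched to the pattern of (\ref{pfaffan-identity2}), the vanishing is an immediate consequence of the Plücker-type expansion combined with the structural identity.
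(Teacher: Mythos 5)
Your entry-level computations are correct: $\partial_s(i,j)_l$, $\partial_r(i,j)_l$, and the identity $\partial_s^3(i,j)_l-\partial_r(i,j)_l=3\bigl(\varphi_i^{(-1)}\varphi_j^{(-2)}-\varphi_i^{(-2)}\varphi_j^{(-1)}\bigr)$ all check out, and the overall flavor (bordered-pfaffian representations of the derivatives and shifts, closed by a pfaffian identity) is indeed the paper's. But the middle of your argument contains a genuine gap. It is not true that $\partial_s^3\tau_l$, $(\partial_s-b)^3\tau_{l+1}$ and $(\partial_r-b^3)\tau_{l+1}$ can each be written as a single two-bordered pfaffian: third-order derivatives (and the $r$-derivative) of a Gram-type pfaffian produce, besides a two-bordered term, a \emph{four}-bordered pfaffian ${\rm Pf}(1,\dots,2N,d_{-2},d_{-1},d_0,d^l)_l$. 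The heart of the paper's proof is exactly these decompositions: $(\partial_s-b)^3\tau_{l+1}$ and $(\partial_r-b^3)\tau_{l+1}$ share the same two-bordered part (bordered by $\varphi_i^{(-3)}(l)$ and $\varphi_i^{(0)}(l+1)$, corner $-b^3$) and carry the four-bordered pfaffian with coefficients $+1$ and $-2$ respectively, so that $\tfrac13\bigl((\partial_s-b)^3-(\partial_r-b^3)\bigr)\tau_{l+1}={\rm Pf}(1,\dots,2N,d_{-2},d_{-1},d_0,d^l)_l$, while $\tfrac13(\partial_s^3-\partial_r)\tau_l={\rm Pf}(1,\dots,2N,d_{-2},d_{-1})_l$. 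Your proposal never establishes these coefficients, and the entry-level identity cannot substitute for them, because $\partial_s^3$ of a pfaffian is not the pfaffian with each entry differentiated three times.

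Second, your finishing mechanism is not viable as stated. The identity you invoke, (\ref{pfaffan-identity2}), relates a determinant bordered by two rows and two columns to products of pfaffians; what is actually needed is the Pl\"ucker-type expansion of the product ${\rm Pf}(\cdots,d_{-2},d_{-1},d_0,d^l)\,{\rm Pf}(\cdots)$ into ${\rm Pf}(\cdots,d_{-2},d_{-1}){\rm Pf}(\cdots,d_0,d^l)-{\rm Pf}(\cdots,d_{-2},d_0){\rm Pf}(\cdots,d_{-1},d^l)+{\rm Pf}(\cdots,d_{-2},d^l){\rm Pf}(\cdots,d_{-1},d_0)$, which, after substituting $\tau_l$, $\tau_{l+1}$, $\partial_s\tau_l$, $\partial_s^2\tau_l$, $(\partial_s-b)\tau_{l+1}$, $(\partial_s-b)^2\tau_{l+1}$ and the two combined third-order expressions, reproduces term by term the Hirota expansion of (\ref{bilinear_BKP2}). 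There is no stage at which a four-bordered determinant vanishes because ``two border columns become linearly dependent'': the border vectors $\varphi_i^{(-2)}(l)$, $\varphi_i^{(-1)}(l)$, $\varphi_i^{(0)}(l)$, $\varphi_i^{(0)}(l+1)$ are generically independent, and the conclusion comes from matching the two sides of the pfaffian expansion identity, not from a rank argument. So the skeleton is right, but the two steps carrying the real content --- the four-bordered decompositions with their precise coefficients, and the correct pfaffian identity closing the computation --- are missing or misidentified.
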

 \begin{proof}
First, we define the pfaffian elements in addition to (\ref{bilinear_BKP1}):
$$
{\rm Pf}(i,d_n)_{l}=\varphi_i^{(n)}(l)\,,
\qquad
{\rm Pf}(d_m,d_n)_{l}=0\,,
$$
$$
{\rm Pf}(i,d^l)_{l}=\varphi_i^{(0)}(l+1)\,,
\qquad
(d_m,d^k)_{kl}=(-b)^{-m}\,.
$$
Then the following differential and difference formulas are obtained previously or can be easily verified
$$
  \partial_s \tau_{l}
={\rm Pf}\pmatrix{
\matrix{(i,j)_{l} \cr {}} &\varphi_i^{(-1)}(l) &\varphi_i^{(0)}(l) \cr
&&0}={\rm Pf}(1,2,\cdots,2N,d_{-1},d_0)_{l}\,,
$$
$$
  \partial_s^2 \tau_{l}
={\rm Pf}\pmatrix{
\matrix{(i,j)_{l} \cr {}} &\varphi_i^{(-2)}(l) &\varphi_i^{(0)}(l) \cr
&&0}={\rm Pf}(1,2,\cdots,2N,d_{-2},d_0)_{l}\,,
$$

$$
   \tau_{l+1}
={\rm Pf}\pmatrix{
\matrix{(i,j)_{l} \cr {}} &\varphi_i^{(0)}(l) &\varphi_i^{(0)}(l+1) \cr
&&1}={\rm Pf}(1,2,\cdots,2N,d_{0},d^l)_{l}\,,
$$

$$
(\partial_s-b) \tau_{l+1}=
{\rm Pf}\pmatrix{
\matrix{ (i,j)_{l} \cr {}} &\varphi_i^{(-1)}(l)
&\varphi_i^{(0)}(l+1) \cr
&&-b}={\rm Pf}(1,2,\cdots,2N,d_{-1},d^l)_{l}\,,
$$

$$
(\partial_s-b)^2 \tau_{l+1}=
{\rm Pf}\pmatrix{
\matrix{ (i,j)_{l} \cr {}} &\varphi_i^{(-2)}(l)
&\varphi_i^{(0)}(l+1) \cr
&&b^2}={\rm Pf}(1,2,\cdots,2N,d_{-2},d^l)_{l}\,,
$$

$$
  \frac{1}{3}(\partial_s^3-\partial_r) \tau_{l}
={\rm Pf}\pmatrix{
\matrix{(i,j)_{l} \cr {}} &\varphi_i^{(-2)}(l) &\varphi_i^{(-1)}(l) \cr
&&0}={\rm Pf}(1,2,\cdots,2N,d_{-2},d_{-1})_{l}\,.
$$
Moreover, the following relations can be further verified
\begin{eqnarray*}
&&  (\partial_r-b^3) \tau_{l+1} = {\rm Pf}\pmatrix{
\matrix{ (i,j)_{l} \cr {}} &\varphi_i^{(-3)}(l)
&\varphi_i^{(0)}(l+1) \cr
&&-b^3} \\
   &&  \qquad -2{\rm Pf}\pmatrix{
\matrix{ (i,j)_{l} \cr {}} &\varphi_i^{(-2)}(l) &\varphi_i^{(-1)}(l)
&\varphi_i^{(0)}(l) &\varphi_i^{(0)}(l+1) \cr
&&0&0&b^2 \cr
&&&0&-b \cr
&&&&1}\\
&& \qquad ={\rm Pf}(1,2,\cdots,2N,d_{-1},d^l)_{l}-2{\rm Pf}(1,2,\cdots,2N,d_{-2},d_{-1},d_{0},d^l)_{l}\,,
\end{eqnarray*}
\begin{eqnarray*}
   && (\partial_s-b)^3 \tau_{l+1}=
{\rm Pf}\pmatrix{
\matrix{ (i,j)_{l} \cr {}} &\varphi_i^{(-3)}(l)
&\varphi_i^{(0)}(l+1) \cr
&&-b^3} \\
   &&  \qquad +{\rm Pf}\pmatrix{
\matrix{(i,j)_{l} \cr {}} &\varphi_i^{(-2)}(l) &\varphi_i^{(-1)}(l)
&\varphi_i^{(0)}(l) &\varphi_i^{(0)}(l+1) \cr
&&0&0&b^2 \cr
&&&0&-b \cr
&&&&1}\\
   &&  \qquad ={\rm Pf}(1,2,\cdots,2N,d_{-1},d^l)_{l}+{\rm Pf}(1,2,\cdots,2N,d_{-2},d_{-1},d_{0},d^l)_{l}\,,
\end{eqnarray*}
thus, we get
\begin{equation*}
\frac{1}{3}\left((\partial_s-b)^3-(\partial_r-b^3)\right) \tau_{l+1}=
{\rm Pf}(1,2,\cdots,2N,d_{-2},d_{-1},d_{0},d^l)_{l}\,.
\end{equation*}
Then an algebraic identity of pfaffian \cite{HirotaBook}
\begin{eqnarray*}
  && {\rm Pf} (\cdots, d_{-2}, d_{-1}, d_0, d^l) {\rm Pf} (\cdots)= {\rm Pf} (\cdots, d_{-2}, d_{-1}) {\rm Pf} (\cdots, d_0, d_1) \\
   && \quad - {\rm Pf} (\cdots, d_{-2},  d_0) {\rm Pf} (\cdots, d_{-1}, d^l) +
   {\rm Pf} (\cdots, d_{-2}, d^l) {\rm Pf} (\cdots, d_0, d_1)\,,
\end{eqnarray*}
derives
$$
\frac{1}{3}\left((\partial_s-b)^3-(\partial_r-b^3)\right) \tau_{l+1} \times \tau_{l}=
  \frac{1}{3}(\partial_s^3-\partial_r) \tau_{l} \times \tau_{l+1}-   \partial_s^2 \tau_{l} \times
(\partial_s-b) \tau_{l+1}+ (\partial_s-b)^2 \tau_{l+1} \times \partial_s \tau_{l}\,,
$$
which is equivalent to
$$
\left((D_s-b)^3-(D_r-b^3)\right)\tau_{l+1}\cdot\tau_l=0\,.
$$
 \end{proof}
Next, we preform a reduction in parallel to period 3 reduction for the continuous case.
Imposing the same reduction condition (\ref{Nsoliton-VE3}), which is also of the form
$$
\frac{1}{p_i^3}+\frac{1}{p_{2N+1-i}^3}
=b^2\left(\frac{1}{p_i}+\frac{1}{p_{2N+1-i}}\right)\,,
$$
and note that $\tau_l$ is rewritten as
$$
\tau_{l}= \left(\prod_{i=1}^{2N}\varphi_i^{(0)}(l)\right)
{\rm Pf}\pmatrix{\displaystyle
\frac{\delta_{j,2N+1-i}c_{i,j}}{\varphi_i^{(0)}(l)\varphi_{2N+1-i}^{(0)}(l)}
+\frac{p_i-p_j}{p_i+p_j}}\,,
$$
it can easily shown that the pfaffian $\tau_l$ satisfies
\begin{equation}
 \partial_r\tau_l=b^2\partial_s\tau_l\,,
 \end{equation}
therefore we have
 \begin{equation}
   (D_{s}^3-3bD_{s}^2+2b ^2D_{s})\tau_{l+1} \cdot \tau_l=0\,.
   \label{Bilinear_semiVE}
 \end{equation}
 In what follows, we construct a semi-discrete reduced Ostrovsky equation based on Eq. (\ref{Bilinear_semiVE}). First, by using the following relations
 \begin{equation*}
  \frac{D_{s} \tau_{l+1} \cdot \tau_{l}}{\tau_{l+1}\tau_{l}}= \left(\ln  \frac{\tau_{l+1}}{\tau_{l}}\right)_s\,,
 \end{equation*}

  \begin{equation*}
  \frac{D^2_{s} \tau_{l+1} \cdot \tau_{l}}{\tau_{l+1}\tau_{l}} = \left(\ln (\tau_{l+1}\tau_{l})\right)_{ss} +\left(\left(\ln  \frac{\tau_{l+1}}{\tau_{l}}\right)_s\right)^2  \,,
 \end{equation*}

  \begin{equation*}
 \frac{D^3_{s} \tau_{l+1} \cdot \tau_{l}}{\tau_{l+1}\tau_{l}} = \left(\ln  \frac{\tau_{l+1}}{\tau_{l}}\right)_{sss}+ 3
\left(\ln  \frac{\tau_{l+1}}{\tau_{l}}\right)_s \left(\ln (\tau_{l+1}\tau_{l})\right)_{ss} + \left(\left(\ln  \frac{\tau_{l+1}}{\tau_{l}}\right)_s\right)^3  \,,
 \end{equation*}
 one obtains
 \begin{eqnarray}
   \left(\ln  \frac{\tau_{l+1}}{\tau_{l}}\right)_{sss} &=&    \left( b-\left(\ln  \frac{\tau_{l+1}}{\tau_{l}}\right)_s\right) \nonumber \\
    &&  \left[3\left(\ln (\tau_{l+1}\tau_{l})\right)_{ss} - \left(\ln  \frac{\tau_{l+1}}{\tau_{l}}\right)_s \left( 2b-\left(\ln  \frac{\tau_{l+1}}{\tau_{l}}\right)_s\right)  \right]\,, \label{BL_semiVE2}
 \end{eqnarray}
 from  Eq. (\ref{Bilinear_semiVE}). Next, by using the discrete hodograph transformation (\ref{sd_hodograph_trf}) and dependent variable transformation (\ref{sd_u_trf}), Eq. (\ref{BL_semiVE2}) reads
\begin{equation}\label{sd2-VE1}
  \frac{d} {ds} (u_{l+1}-u_l) = \frac 32 \delta_l (u_{l}+u_{l+1}) -\frac 14 \delta_l(\delta^2_l-4b^2) \,.
\end{equation}
Obviously, the evolution equation for nonuniform mesh $\delta_l$ remains the same as Eq. (\ref{sd1_VE2}). In summary
 \begin{theorem}
 The bilinear equation
 \begin{equation*}
(\frac {1}{b} D_{s}^3-3D_{s}^2+2b D_{s})\tau_{l+1} \cdot \tau_l=0\,
\end{equation*}
determines a semi-discrete analogue of the reduced Ostrovsky equation (\ref{vakhnenko})
\begin{equation}
\label{sd2-VE}
\left\{
\begin{array}{l} \displaystyle
  \frac{d} {ds} (u_{l+1}-u_l) = \frac 32 \delta_l (u_{l}+u_{l+1}) -\frac 14 \delta_l(\delta^2_l-4b^2) \,, \\
\displaystyle \frac{d \, \delta_l}{d\, s} = u_{l+1}-u_{l}\,. \\
\end{array}\right.
\end{equation}
through the dependent variable transformation
$u_l= -2 (\ln \tau_l)_{ss}$ and the discrete hodograph transformation $x_l=2lb -2 \left(\ln {\tau_l} \right)_{s}$, $\delta_l=x_{l+1}-x_l$.
  \end{theorem}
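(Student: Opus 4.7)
The plan is to reduce the bilinear equation directly to the stated system via the Hirota logarithmic identities combined with the hodograph and dependent variable transformations. Everything is self-contained algebra, so the proof is essentially a careful bookkeeping exercise.

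First, I would divide $(D_s^3 - 3bD_s^2 + 2b^2 D_s)\tau_{l+1}\cdot\tau_l = 0$ by $\tau_{l+1}\tau_l$ and apply the standard formulas
\[
\frac{D_s\tau_{l+1}\cdot\tau_l}{\tau_{l+1}\tau_l} = \left(\ln\frac{\tau_{l+1}}{\tau_l}\right)_s,\quad \frac{D_s^2\tau_{l+1}\cdot\tau_l}{\tau_{l+1}\tau_l} = (\ln(\tau_{l+1}\tau_l))_{ss} + \left(\left(\ln\frac{\tau_{l+1}}{\tau_l}\right)_s\right)^2,
\]
together with the cubic analogue. Collecting terms yields equation (\ref{BL_semiVE2}), so this step is already recorded in the paper and needs only verification.

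Next, I would use the hodograph relation $\delta_l = x_{l+1}-x_l = 2b - 2(\ln(\tau_{l+1}/\tau_l))_s$, giving the key substitution $(\ln(\tau_{l+1}/\tau_l))_s = b - \delta_l/2$. Differentiating with respect to $s$ and invoking $u_l = -2(\ln\tau_l)_{ss}$ immediately produces $d\delta_l/ds = -2(\ln(\tau_{l+1}/\tau_l))_{ss} = u_{l+1}-u_l$, which is the second equation of the system.

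For the first equation, I would substitute the above identities into (\ref{BL_semiVE2}): the factor $b - (\ln(\tau_{l+1}/\tau_l))_s$ becomes $\delta_l/2$; the product $(\ln(\tau_{l+1}/\tau_l))_s\bigl(2b - (\ln(\tau_{l+1}/\tau_l))_s\bigr)$ collapses to $(b-\delta_l/2)(b+\delta_l/2) = b^2 - \delta_l^2/4$; and $(\ln(\tau_{l+1}\tau_l))_{ss} = -(u_l+u_{l+1})/2$. The left-hand side $(\ln(\tau_{l+1}/\tau_l))_{sss}$ equals $-\tfrac{1}{2}(u_{l+1}-u_l)_s$. Multiplying both sides by $-2$ and simplifying gives $(u_{l+1}-u_l)_s = \tfrac{3}{2}\delta_l(u_l+u_{l+1}) - \tfrac{1}{4}\delta_l(\delta_l^2-4b^2)$, as required.

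The only potentially error-prone step is the sign bookkeeping in the cubic Hirota identity and in the chain of substitutions; there is no genuine structural difficulty, since the bilinear equation (\ref{Bilinear_semiVE}) has already been established in the preceding lemma via pfaffian manipulations and the reduction $\partial_r\tau_l = b^2\partial_s\tau_l$.
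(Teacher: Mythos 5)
Your proposal is correct and follows essentially the same route as the paper: divide the bilinear equation by $\tau_{l+1}\tau_l$, apply the logarithmic Hirota identities to reach the intermediate relation (\ref{BL_semiVE2}), then substitute $\bigl(\ln(\tau_{l+1}/\tau_l)\bigr)_s = b - \delta_l/2$ together with $u_l = -2(\ln\tau_l)_{ss}$ to obtain both equations of (\ref{sd2-VE}). The sign bookkeeping you flag indeed checks out, and the mesh-evolution equation is obtained exactly as in the paper's earlier derivation of (\ref{sd1_VE2}).
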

Now we turn to check if Eq. (\ref{sd2-VE}) converges to Eq. (\ref{vakhnenko}) in the continuous limit.
By dividing $\delta_l$ on both sides of Eq. (\ref{sd2-VE1}), we have
\begin{equation}
\label{sd2-VE3}
  \frac{1}{\delta_l} \frac{d} {ds} (u_{l+1}-u_l) = \frac 32  (u_{l}+u_{l+1}) -\frac 14 \delta^2_l +b^2\,,
\end{equation}
which converges to exactly the reduced Ostrovsky equation (\ref{vakhnenko})
  \begin{equation*}
  \partial_x (\partial_t+u\partial_x ) u = 3u \,,
\end{equation*}
as $b \to 0$ ($\delta_l \to 0$).

Regarding the $N$-soliton solution, it is obvious that Eq. (\ref{sd2-VE}) admits the same solution as the semi-discrete reduced Ostrovsky equation (\ref{sd1_VE2}) proposed previously.

So far, we have constructed semi-discrete analogues of the reduced Ostrovsky equation (\ref{vakhnenko}) and its differentiation form (\ref{short-DP-eq}). In light of the link between (\ref{short-DP-eq}) and (\ref{vakhnenko}), let us find a connection between (\ref{sd1-VE}) and (\ref{sd2-VE}). First, by taking a backward difference of Eq. (\ref{sd2-VE3}), we obtain
\begin{equation}\label{sd1-VE3}
  \frac{1}{\delta_l} \frac{d}{ds}(u_{l+1}-u_{l}) -\frac{1}{\delta_{l-1}} \frac{d}{ds}(u_{l}-u_{l-1})
  = \frac 32 (u_{l+1}-u_{l-1}) -\frac 14 \left(\delta^2_{l} -\delta^2_{l-1}\right)\,.
\end{equation}
On the other hand, by substituting the third equation into the first equation in (\ref{sd1-VE}) and eliminating $m_l$, one arrives at exactly the same equation (\ref{sd1-VE3}).
\begin{remark}
Although we have derived semi-discrete analogues of the reduced Ostrovsky equation (\ref{vakhnenko}) and its differentiation form (\ref{short-DP-eq}) from totally different bilinear equations, the connection between them is clear here. In other words, the semi-discrete analogue for the short wave limit of the DP equation is simply
a backward difference of semi-discrete reduced Ostrovsky equation. This finding corresponds to the fact that a differentiation of the reduced Ostrovsky equation (\ref{vakhnenko}) with respect to spatial variable $x$ gives rise to the short wave limit of the DP equation (\ref{short-DP-eq}) in the continuous case. Forward difference and differentiation are two typical operators corresponding to discrete systems and continuous systems, respectively.
In the world of integrable systems, we observe a perfect correspondence between these two operators and discrete and continuous systems.
\end{remark}
Lastly, for the sake of convenience, we list the $\tau$-functions for one- and two-soliton solutions.

\noindent {\bf One-soliton} \\
For $N=1$, we have
\begin{equation}
\fl \tau_l={\rm Pf}(1,2)=c_1+\frac{p_1-p_2}{p_1+p_2}e^{\eta_1(l)+\eta_{2}(l)}\,,
\end{equation}
where $c_1$ is a nonzero constant,
$$
e^{\eta_i(l)}=\left(\frac{1+bp_i}{1-bp_i}\right)^le^{p^{-1}_is+\xi_{i0}}\,,
$$
and $p_1$, $p_2$ are related by a constraint
\begin{equation}
\frac{1}{p_1^3}+\frac{1}{p_{2}^3}
=b^2\left(\frac{1}{p_1}+\frac{1}{p_{2}}\right)\,.
\end{equation}
\noindent {\bf Two-soliton} \\
For $N=2$, we have
\begin{eqnarray*}
\fl \tau_l&=&{\rm Pf}(1,2,3,4)
={\rm Pf}(1,2){\rm Pf}(3,4)-{\rm Pf}(1,3){\rm Pf}(2,4)
+{\rm Pf}(1,4){\rm Pf}(2,3)\\
\fl &=&\frac{p_1-p_2}{p_1+p_2}e^{\eta_1(l)+\eta_{2}(l)} \times
 \frac{p_3-p_4}{p_3+p_4}e^{\eta_3(l)+\eta_{4}(l)}
-\frac{p_1-p_3}{p_1+p_3}e^{\eta_1(l)+\eta_{3}(l)} \times
 \frac{p_2-p_4}{p_2+p_4}e^{\eta_2(l)+\eta_{4}(l)}\\
\fl &&\qquad +\left(c_1+\frac{p_1-p_4}{p_1+p_4}e^{\eta_1(l)+\eta_4(l)}\right)
\left(c_2+\frac{p_2-p_3}{p_2+p_3}e^{\eta_2(l)+\eta_3(l)}\right)\,,
\end{eqnarray*}
under the condition
\begin{equation}
\frac{1}{p_1^3}+\frac{1}{p_{4}^3}
=b^2\left(\frac{1}{p_1}+\frac{1}{p_{4}}\right)\,, \quad
\frac{1}{p_2^3}+\frac{1}{p_{3}^3}
=b^2\left(\frac{1}{p_2}+\frac{1}{p_{3}}\right)\,.
\end{equation}
Letting $c_1=c_2=1$ and $e^{\gamma_1}=\frac{p_1-p_4}{p_1+p_4}$ and
$e^{\gamma_2}=\frac{p_2-p_3}{p_2+p_3}$, the above $\tau$-function
can be rewritten as
\begin{equation}
\fl \tau_l=1+e^{\eta_1(l)+\eta_4(l)+\gamma_1}
+e^{\eta_2(l)+\eta_3(l)+\gamma_2}
+b_{12}e^{\eta_1(l)+\eta_2(l)+\eta_3(l)+\eta_4(l)+\gamma_1+\gamma_2}\,,
\end{equation}
where
\begin{equation}
\fl b_{12}=
\frac{(p_1-p_2)(p_1-p_3)(p_4-p_2)(p_4-p_3)}{(p_1+p_2)(p_1+p_3)(p_4+p_2)(p_4+p_3)}\,.
\end{equation}
In the continuous limit $b \to 0$, it is obvious that above one- and two-soliton solutions for semi-discrete reduced Ostrovsky equation converge to the one- and two-soliton solutions for the reduced Ostrovsky equation listed in \cite{FMO-VE}.
\section{Conclusion and further topics}
There are two versions of the reduced Ostrovsky equation, one is the original form (\ref{vakhnenko}), the other is its differentiation form, or is also called the short wave limit of the DP equation (\ref{short-DP-eq}). In the present paper, we have constructed their integrable semi-discretizations separately based on their different bilinear forms.
Two versions of integrable semi-discretizations of the reduced Ostrovsky equation share the same $N$-soliton solution in terms of pfaffians, which converges to the $N$-soliton solution of the continuous Ostrovsky equation (\ref{vakhnenko}), as well as its differentiation form (\ref{short-DP-eq}). The connection between two versions of integrable discretizations is made clear. In the continuous case, the short wave limit of the DP equation (\ref{short-DP-eq}) is the differentiation form of the reduced Ostrovsky equation, whereas in the discrete case, the semi-discrete short wave limit of the DP equation is the forward difference for the semi-discrete reduced Ostrovsky equation.

Similar to our previous results \cite{dCH,dCHcom,SPE_discrete1}, the semi-discrete reduced Ostrovsky equation proposed here can be served as an integrable numerical scheme, the so-called self-adaptive moving mesh method, for the numerical simulation. It seems that the semi-discrete reduced Ostrovsky equation (\ref{sd2-VE}) has more advantages than the semi-discrete analogue of the short wave limit of the DP equation in serving as a self-adaptive moving mesh method. We would like to report our results in this aspect in a forthcoming paper. Finally, we haven't succeeded in constructing an integrable fully discrete reduced Ostrovsky equation. If we could have done so, then a newly integrable discrete Tzitzeica equation might be constructed due to a direct link between these two equations. It is a further topic to be explored in the future. Another problem to be solved is the integrable discretization of the DP equation which is a more challenging problem in compared with the ones of the Camassa-Holm equation and the reduced Ostrovsky equation. We are tacking this problem based on our previous work on the DP equation \cite{FMO-DP}.
\section*{Acknowledgment}
This work of BF is partially supported by the National Natural Science Foundation of China (No. 11428102).
\section*{References}

\end{document}